\documentclass{article}
\usepackage{amsmath}

\usepackage{amsthm}
\usepackage{sectsty}
\usepackage{bbm}
\usepackage{enumitem}
\usepackage{mathdots}
\usepackage{indentfirst}
\usepackage{tikz}
\usepackage{siunitx}
\usepackage{amssymb}
\usepackage{setspace}
\usepackage{amsfonts}
\usepackage{float}
\usepackage{graphicx}
\usepackage{multicol}
\usepackage{subcaption}
\usepackage[export]{adjustbox}
\usepackage[margin=1in]{geometry}
\sectionfont{\centering}
\newtheorem{theorem}{Theorem}[section]

\newtheorem{proposition}[theorem]{Proposition}
\newtheorem{corollary}[theorem]{Corollary}

\newtheorem{lemma}[theorem]{Lemma}

\title{Random Matching Markets with Correlated Preferences}
\author{Bill Wang}
\date{March 2026}
\begin{document}

\maketitle

\begin{abstract}
In the Gale-Shapley model of two-sided matching, it is well known that for generic preferences, the outcomes for each side can vary dramatically in the male-optimal vs. female-optimal stable matchings. In this paper, we show that under a widely used characterization of similarity in rankings, even a weak correlation in preferences guarantees assortative matching with high probability as the market size tends to infinity. It follows that the men's average ranking of women and the women's average ranking of men are asymptotically equivalent in all stable matchings with high probability, as long as the market imbalance is not too extreme.
\end{abstract}

\section{Introduction and Literature}
The study of two-sided matching initiated by \cite{gale1962college} has undoubtedly been one of the most important developments in modern economic theory. Its wide-scale adoption in various forms to school choice, residency placement, and organ exchange marks a significant achievement in bringing theory to bear on real life. Despite immense progress over the decades, many basic aspects of the theory are still poorly understood. One of the main obstacles to understanding its structure is the rapid explosion in the number of distinct preferences as the market size grows. For a given set of preferences, the collection of stable matchings is known to form a lattice, with a unique best stable matching for men and women respectively (when there are no ties). However, with $n$ men and $n$ women on each side of the market, the total number of distinct preferences is $(n!)^{2n}$. Even for $n=5$, this is a daunting figure, and there is no general theory that organizes the entire collection of stable matchings in terms of welfare. Thus, to make things tractable, a common approach is to study random matching markets, whereby the preferences on both sides are generated probabilistically, and analyze the likely outcomes in such large markets.
\medskip
\par
Several influential papers adopt this approach, but generally yield conclusions that are not entirely satisfactory. Most notably, \cite{pittel1989average} showed that in balanced markets where preferences are uniformly random, the men's average ranking of women in the male-optimal stable matching grows as $\log{n}$ with high probability, whereas in the female-optimal stable matching it grows as $n/\log{n}$ with high probability. This implies a multiplicity of equilibria and therefore substantial incentives for strategic behavior, which is at odds with empirical observations (e.g. \cite{roth1999redesign}). Maintaining the assumption of uniformly random preferences, \cite{ashlagi2017unbalanced} showed that adding a single individual to one side of the market leads to a near collapse of the core in favor of the short side and at the expense of the long side. This result, however, remains unnerving, because unbalanced markets abound in the real world, yet we seldom observe the kind of drastic differences in welfare that the theory predicts. In fact, most evidence points to assortative matching as the prevailing outcome (e.g. \cite{hitsch2010matching}).
\medskip
\par
It turns out that the assumption of uniform random preferences lies at the heart of these results. Yet it is difficult to find scenarios in practice where preferences over the full set of alternatives appear even remotely close to random. People may genuinely disagree as to whether Harvard is better than Princeton, for example, but few would object to the claim that both are better than say, Montana State. In this paper, we examine the set of stable outcomes in random matching markets when preferences by both genders exhibit correlation. Specifically, we assume that individual preferences independently follow a probability distribution on the set of permutations over members of the opposite sex. Correlation in preferences is parametrized by, and monotonically decreasing in, a coefficient $\phi\in[0,1]$, where a coefficient of $\phi=1$ corresponds to uniform random preferences and a coefficient of $\phi=0$ corresponds to unanimous preferences.
\medskip
\par
Our main result shows that when preferences are correlated, however weakly, assortative matching arises in large markets: for any fixed $\phi<1$, the quantile gap between any pair of stably matched man and woman vanishes in a balanced market with high probability as the market size tends to infinity. In doing so, we provide probabilistic bounds on the rank gap as a function of the correlation coefficients of both sides. As corollaries, we show that as long as the market imbalance does not explode faster than a polynomial rate, then a) For both genders, their average rank of matched partners is asymptotically equivalent over all stable matchings with high probability; b) The men's average ranking of women is also asymptotically equivalent to the women's average ranking of men with high probability. In other words, a slight amount of correlation in preferences is enough to guarantee agents be matched with someone of similar rank regardless of which side is doing the proposing and whether one is on the long side or short side of the market. This illustrates how the uniform random preference setting is really a `knife-edge' case that should be treated with caution.
\medskip
\par
This paper serves two purposes: First, it fills a void in the theoretical literature regarding the role of correlated preferences in the outcome of stable matchings. A recent paper (\cite{hoffman2023stable}) shows that the number of stable matchings grows exponentially in the market size with high probability whenever preferences follow a Mallows distribution, but does not characterize the structure of such matchings or address welfare implications. Attempts in understanding the latter have largely relied on simulation (e.g. \cite{boudreau2010marriage}) instead. Second, we see this paper as providing a micro foundation for assortative matching in the nontransferable utility framework. This contrasts with Becker's treatment of the subject in the transferable utility framework, where he showed that assortative matching arises when the surplus function is supermodular (\cite{becker1973theory}). While the theory of assortative matching is supported by ``complementarity'' of traits in potential couples when utility is transferable, a simple yet rigorous explanation of the phenomenon from the nontransferable utility side has been lacking. Given that matching in most social contexts lies somewhere in between the transferable and nontransferable worlds, this paper provides an alternative perspective on the issue by showing that stability and correlation in preferences, two conditions which arise naturally, necessarily implies assortative matching.

\section{Setup and Notation}

This section lays out the primitives of the model and the notation used throughout the remainder of this paper.
\medskip
\par
Let $M=\{m_{1},\ldots,m_{n}\}$ and $W=\{w_{1},\ldots,w_{l}\}$ be two disjoint sets denoting men and women respectively. Each man has preferences over the women and each woman has preferences over the men. A matching is a map $\mu:M\cup W\rightarrow M\cup W$ such that for every $m\in M$, either $\mu(m)=m$ or $\mu(m)\in W$ and for every $w\in W$, either $\mu(w)=w$ or $\mu(w)\in M$. In addition, if $\mu(x)=y$ for any $x\in M\cup W$, then $\mu(y)=x$. This is the one-to-one requirement. Without loss of generality, we assume $l=n+k$, where $k\geq 0$ captures the possibility of unbalanced markets. Furthermore, we impose the condition that every individual prefers to be paired up with someone of the opposite sex rather than remain single. A pair $(m,w)$ is said to block a matching $\mu$ if $m$ prefers $w$ over $\mu(m)$ and $w$ prefers $m$ over $\mu(w)$, and a matching without any blocking pairs is deemed stable.
\medskip
\par
All agents' preferences are assumed to independently follow a Mallows distribution \cite{mallows1957non}, which is defined over permutations and parametrized by a coefficient $\phi\in[0,1]$. For a Mallows distribution on $n$ elements with coefficient $\phi$, the probability that a permutation arises is proportional to $\phi^{\text{inv}(\pi)}$, where $\text{inv}(\pi)$ counts the number of inverted pairs in $\pi$. For example, if $\pi=(2,3,1)$ is the permutation that cyclically rotates the tuple, then $\text{inv}(\pi)=2$ because there are two inverted pairs: $(2,1)$ and $(3,1)$. More precisely, the $(\phi,n)$-Mallows measure $F_{\phi,n}$ over the set of permutations on $n$ elements $S_{n}$ is given by
\begin{equation*}
    F_{\phi,n}(\pi)=\frac{\phi^{\text{inv}(\pi)}}{Z_{\phi,n}},
\end{equation*}
where
\begin{equation*}
    \text{inv}(\pi)=|\{(i,j):i<j \text{ and } \pi(i)>\pi(j)\}|,
\end{equation*}
and
\begin{equation*}
    Z_{\phi,n}=\prod_{i=1}^{n}\frac{1-\phi^{i}}{1-\phi^{n}}
\end{equation*}
is a normalization constant.
\medskip
\par
As an example, suppose there are three women $\{w_{1},w_{2},w_{3}\}$, and the men's preferences over them follows the Mallows distribution $F_{0.5,3}$. This means that each man independently has preferences
    \begin{align*}
        & w_{1}\succ w_{2}\succ w_{3} \quad \text{with probability} \quad \frac{0.5^{0}}{Z};\\
        & w_{1}\succ w_{3}\succ w_{2} \quad \text{and} \quad w_{2}\succ w_{1}\succ w_{3},\quad \text{each with probability} \quad \frac{0.5^{1}}{Z};\\
        & w_{2}\succ w_{3}\succ w_{1} \quad \text{and} \quad w_{3}\succ w_{1}\succ w_{2},\quad \text{each with probability} \quad \frac{0.5^{2}}{Z};\\
        & w_{3}\succ w_{2}\succ w_{1} \quad \text{with probability} \quad \frac{0.5^{3}}{Z};
    \end{align*}
where $Z=21/8$ is a normalization constant that ensures the probabilities sum to $1$.
\medskip
\par
We allow men and women to have different degrees of correlation over the opposite sex, but each individual on the same side is parametrized by the same coefficient, denoted $\phi_{m}$ for men and $\phi_{w}$ for women. The Mallows central order is the reference order against which inversions are counted, namely, the identity permutation. In the context of college admissions, this can be thought of as a universal ranking that people use and broadly agree on, such as the one published by U.S. News \& World Report. We denote by $r(i)$ person $i$'s rank in the Mallows central order.

\section{Stable Matching under Correlated Preferences}

\subsection{Preliminaries}

Before proceeding towards the analysis, we derive a key property of $F_{\phi,n}$: With high probability, no element will be shifted by more than $O(\log{n})$ positions from its Mallows central order as $n\rightarrow\infty$. In fact, this statement remains true for a polynomial number of independent draws from a Mallows distribution:
\begin{lemma}\label{lem1}
    For every $j\in\{1,\ldots,s\}$, let $\pi_{j}\sim F_{\phi,t}$ independently. If $s\leq O(t^{z})$ for some $z\in\mathbb{R}_{+}$ and $d=c\log{t}$, where $c>-(z+1)/\log{\phi}$, then
    \begin{equation*}
        \lim_{t\rightarrow\infty}\mathbb{P}\bigg(\bigcap_{j=1}^{s}\bigcap_{i=1}^{t}\big\{|\pi_{j}(i)-i|<d\big\}\bigg)=1.
    \end{equation*}
\end{lemma}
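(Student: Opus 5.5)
The plan is a union bound over all $s\cdot t$ pairs $(j,i)$, combined with a tail estimate for a single Mallows permutation of the form
\begin{equation*}
\mathbb{P}\big(|\pi(i)-i|\geq d\big)\leq 2\phi^{\lceil d\rceil}\qquad\text{for every } i\in\{1,\ldots,t\},
\end{equation*}
uniform in $i$ and $t$. Given this, the probability of the complement of the event in the statement is at most
\begin{equation*}
2st\,\phi^{c\log t}=O\big(t^{z+1}\cdot t^{c\log\phi}\big)=O\big(t^{\,z+1+c\log\phi}\big).
\end{equation*}
This tends to $0$ exactly because $c>-(z+1)/\log\phi$, i.e.\ $z+1+c\log\phi<0$. (Rounding $d$ up to an integer only helps. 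The case $\phi=0$ is trivial, since then $\pi$ is the identity almost surely.)

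To get the single-permutation tail bound I would use the sequential (repeated-selection) description of $F_{\phi,t}$. Build the permutation by filling positions $1,2,\ldots,t$ in order. At each step, choose the $k$-th smallest of the remaining elements with probability proportional to $\phi^{k-1}$, independently across steps. The inversions created at a step equal $k-1$, and the normalizations multiply to $Z_{\phi,t}$, so this reproduces $F_{\phi,t}$. I first need to fix which convention for $\pi(i)$ (value at position $i$, or position of element $i$) the paper intends. Since $\text{inv}(\pi)=\text{inv}(\pi^{-1})$, the Mallows law is invariant under inversion, so the convention does not matter, and I may take $\pi(i)$ to be the position of element $i$.

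\emph{Upper tail.} Suppose $\pi(i)\geq i+d$. Element $i$ survives the first $i+d-1$ steps. At most $i-1$ of those steps can consume an element smaller than $i$, so there are at least $d$ steps at which $i$ is present and not chosen while no smaller element was chosen. I would refine this by looking at the steps after all elements below $i$ are exhausted (or, more carefully, by a direct count). At each such step $i$ is the smallest remaining element and is skipped with probability at most $1-(1-\phi)=\phi$. By independence this gives $\mathbb{P}(\pi(i)-i\geq d)\leq\phi^{d}$.

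\emph{Lower tail.} Here I would not argue directly, because a direct union bound over steps costs an extra factor of $t$, and that would weaken the condition to $c>-(z+2)/\log\phi$. Instead I would use the reversal symmetry. Let $\rho(i)=t+1-i$. The map $\pi\mapsto\rho\circ\pi\circ\rho$ preserves the inversion count, so it preserves $F_{\phi,t}$. It also turns the event $\{i-\pi(i)\geq d\}$ into $\{\pi'(t+1-i)-(t+1-i)\geq d\}$, so the lower tail for $i$ is the upper tail for $t+1-i$.

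The step I expect to be the main obstacle is making the upper-tail argument rigorous. The worry is that $i$ might be skipped at steps where smaller elements still remain, which would cost less than a factor $\phi$ per step. If the counting gets messy, I would fall back on an injection argument. Moving element $i$ from position $p\geq i+d$ leftward to the first position occupied by an element larger than $i$ removes at least $p-i$ inversions, which gives $\phi^{d}$ up to a factor $1/(1-\phi)$ after summing over $p$. A constant factor like this is harmless, since only the exponent of $t$ matters in the final bound.
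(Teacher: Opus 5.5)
Your assembly matches the paper's argument: a per-entry bound $\mathbb{P}(|\pi(i)-i|\ge d)\le 2\phi^{d}$, a union bound, and the exponent condition $z+1+c\log\phi<0$. Your single union bound over all $st$ pairs is cleaner than the paper's step of raising $1-2t\phi^{d}$ to the power $s$ and then Taylor-approximating. The difference is that the paper simply cites the per-entry bound (Theorem 1 of Bhatnagar--Peled), while you set out to prove it, and your upper-tail argument has a genuine gap. From $\pi(i)\ge i+d$ (element $i$ in position $\ge i+d$) you correctly get at least $d$ steps at which a larger element is chosen while $i$ is present. You do not get $d$ such steps \emph{after} the elements below $i$ are used up, so restricting to those steps does not give $\phi^{d}$. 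Skips made while $r\ge 1$ smaller elements remain are actually less likely, with conditional probability at most $\phi^{r+1}$. The real problem is that their timing is random. If you charge only $\phi$ per skip and sum over where the $d$ skips fall among the first $i+d-1$ steps, you pick up a factor $\binom{i+d-1}{d}$, which is fatal when $i$ is of order $t$. The fallback injection also fails. Moving $i$ from position $p$ to an earlier position $q$ changes the inversion count by the number of smaller elements in positions $q,\ldots,p-1$ minus the number of larger ones, and the smaller ones can cancel the gain. For example, with $t=5$ and $i=3$, your map sends $(4,1,2,5,3)$ to $(3,4,1,2,5)$; both have $4$ inversions although $p-i=2$. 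Moving $i$ only to the nearest larger element on its left removes a single inversion, so that reading fails too.

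The repair uses the freedom you already noted. Since $F_{\phi,t}$ is invariant under $\pi\mapsto\pi^{-1}$, let $\pi(i)$ be the \emph{value placed at position $i$} in your sequential construction. At step $i$, at most $i-1$ of the values $1,\ldots,i+d-1$ have been used, so at least $d$ of them remain, all smaller than $i+d$. Hence $\pi(i)\ge i+d$ requires choosing the $k$-th smallest remaining value with $k\ge d+1$, which has conditional probability at most $\phi^{d}$. This gives the upper tail with no interleaving issue, and your reversal map $\pi\mapsto\rho\circ\pi\circ\rho$ turns it into the lower tail, giving $2\phi^{d}$. Equivalently, in your convention the position of $i$ is at most $i+b_i$, where $b_i$ counts the larger elements preceding $i$; the $b_i$ are independent with $\mathbb{P}(b_i=j)\propto\phi^{j}$ on $\{0,\ldots,t-i\}$. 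If you prefer to keep your skip-counting, use the per-step bound $\phi^{r+1}$ and sum over interleavings; this yields $\phi^{d}/(1-\phi)$, which is also enough. With any of these fixes, or with the citation the paper uses, the rest of your proof goes through.
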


\begin{proof}
Note that by independence, this is equivalent to showing that 
\begin{equation*}
    \lim_{t\rightarrow\infty}\mathbb{P}\bigg(\bigcap_{i=1}^{t}\big\{|\pi(i)-i|<d\big\}\bigg)^{s}=1
\end{equation*}
for a single $\pi\sim F_{\phi,n}$.
\medskip
\par
By Theorem $1$ of \cite{bhatnagar2015lengths}, for any $0<\phi<1$, $1\leq i\leq t$, and $d\geq 1$, we have
\begin{equation*}
    \mathbb{P}(|\pi(i)-i|\geq d)\leq 2\phi^{d}.
\end{equation*}

A union bound gives
\begin{equation*}
    \mathbb{P}\bigg(\bigcup_{i=1}^{t}\big\{|\pi(i)-i|\geq d\big\}\bigg)\leq 2t\phi^{d}.
\end{equation*}

Taking complements and raising powers:
\begin{equation*}
    \mathbb{P}\bigg(\bigcap_{i=1}^{t}\big\{|\pi(i)-i|<d\big\}\bigg)^{s}\geq (1-2t\phi^{d})^{s}.
\end{equation*}

Our goal is to pick $d$ appropriately as a function of $t$ so that the right-hand side approaches $1$ as $t$ tends to infinity. At the same time however, the left-hand side needs to be such that the concentration bounds remain meaningful. It turns out that a logarithm serves as the correct rate:
\medskip
\par
Letting $d=c\log{t}$, where $c>-(z+1)/\log{\phi}$, it is easy to see that $t\phi^{c\log{t}}\rightarrow 0$ as $t\rightarrow\infty$. Thus, by a Taylor expansion of $\log(1-x)=-x$ around $x=0$, we have
\begin{align*}
    \log[(1-2t\phi^{c\log{t}})^{s}]&=s\log(1-2t^{1+c\log{\phi}})\\
    &\approx s(-2t^{1+c\log{\phi}})\\
    &\geq -O(t^{z+1+c\log{\phi}}).
\end{align*}

It follows that
\begin{equation*}
    \lim_{t\rightarrow\infty}\log[(1-2t\phi^{c\log{t}})^{s}]=0\implies\lim_{t\rightarrow\infty}(1-2t\phi^{c\log{t}})^{s}=1.
\end{equation*}
\end{proof}

The lemma roughly says that as long as the market imbalance is not overly extreme, then with high probability, everyone on each side of the market will be ranked quite similarly by everyone of the opposite sex. This technical result will be used in a crucial manner later on.

To prove the assortative matching result, we utilize the fact that when preferences within each gender are highly similar, the mutual rank gap between any matched couple cannot differ by too much. This is formalized by Corollary $1$ of \cite{holzman2014matching}:
\medskip
\par
Given a set of rankings $R$ over $X$, define the displacement of $x\in X$ as $\delta(x)=\max_{r\in R}r(x)-\min_{r\in R}r(x)$ and let $\Delta(R)=\max_{x\in X}\delta(x)$ be the maximal displacement of the rankings $R$. Denote by $R_{M}$ and $R_{W}$ the set of men's preferences over women and women's preferences over men respectively. Furthermore, let $r_{m}(w)$ be $m$'s ranking of $w$ and $r_{w}(m)$ be $w$'s ranking of $m$ for every $m\in M$ and $w\in W$.
\begin{lemma}\label{lem2}
For any given set of preferences, 
\begin{equation*}
    |r_{m}(w)-r_{w}(m)|\leq 2\max\{\Delta(R_{W}),\Delta(R_{M})\}
\end{equation*}
under every stable matching $\mu$ and $(m,w)\in\mu$.
\end{lemma}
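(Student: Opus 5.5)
The plan is a direct counting argument that uses only stability and the definition of displacement; no probability is involved. Fix preferences, a stable matching $\mu$, and a matched pair $(m,w)\in\mu$. Write $a=r_{m}(w)$, $b=r_{w}(m)$ and $D=\max\{\Delta(R_{W}),\Delta(R_{M})\}$. By the symmetry between the two sides, it suffices to show $a\leq b+2D$. The reverse inequality $b\leq a+2D$ then follows by swapping the roles of men and women. Rankings are strict, with no ties, and we assume every woman in question is matched. In the balanced case this holds because everyone is matched. In general we need that every woman $m$ ranks above $w$ is matched; I would check that an unmatched such woman immediately gives a blocking pair with $m$, since everyone prefers a partner to being single.

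\textbf{Step 1 (stability constraint).} Let $A$ be the set of women whom $m$ ranks strictly above $w$, so $|A|=a-1$. For each $w'\in A$, the pair $(m,w')$ must not block $\mu$. Since $m$ prefers $w'$ to $\mu(m)=w$, the woman $w'$ must be matched to a man $m'=\mu(w')$ with $r_{w'}(m')<r_{w'}(m)$. These partners are distinct because $\mu$ is one-to-one, and none of them equals $m$, since $\mu(m)=w\notin A$.

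\textbf{Step 2 (transfer through displacement).} The definition of $\Delta(R_{M})$, the maximal displacement of the women's ranks of men, gives $r_{w'}(m)\leq r_{w}(m)+D=b+D$. Combining with Step 1, $r_{w'}(m')\leq b+D-1$. Applying displacement once more, this time to man $m'$ across the women's rankings, yields $r_{w}(m')\leq r_{w'}(m')+D\leq b+2D-1$. So every partner $\mu(w')$ with $w'\in A$ lies among the men that $w$ ranks in positions $1,\ldots,b+2D-1$, excluding $m$ itself, which occupies position $b$. There are at most $b+2D-2$ such men.

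\textbf{Step 3 (injection count).} The map $w'\mapsto\mu(w')$ is an injection from $A$ into this set of at most $b+2D-2$ men. Hence $a-1\leq b+2D-2$, which gives $a\leq b+2D-1<b+2D$, and the lemma follows.

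The only delicate point is bookkeeping in Step 2. Each use of displacement must compare the same agent's rank across different rankers on the same side: $m$'s rank across women in the first transfer, and $m'$'s rank across women in the second. The strict inequality from stability then has to survive the transfer, which is what buys the $-1$. I expect this to be routine once set up correctly; the main thing to guard against is using $\Delta(R_{W})$ where $\Delta(R_{M})$ is needed, and bounding both by $D$ handles this.
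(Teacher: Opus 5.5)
Your proof is correct and takes a different route from the paper. The paper gives no proof of this lemma: it imports it as Corollary~1 of Holzman and Samet and uses it as a black box. Your argument runs as follows. Every woman whom $m$ ranks above $w$ must be matched, since otherwise she blocks with $m$, and she must be matched to a man she strictly prefers to $m$. Two displacement transfers then place all these partners in $w$'s top $b+2D-1$ positions, and injectivity of $\mu$ gives the count.

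What your route buys is transparency about hypotheses. It uses only stability, strict rankings, and the assumption that everyone prefers a partner to being single. So it visibly covers the unbalanced markets, with rankings of lengths $n+k$ and $n$, to which the paper applies the lemma in Proposition~\ref{prop}. The paper does not restate the cited result's setting, so the reader is left to check that it includes that case.

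Your argument also proves more than the statement. Both transfers in Step~2 compare a man's rank across different women's rankings. The quantity you actually use is therefore $\Delta(R_W)$, not $\Delta(R_M)$ as you labelled it. You thus obtain the one-sided bounds $r_m(w)-r_w(m)\le 2\Delta(R_W)$ and, symmetrically, $r_w(m)-r_m(w)\le 2\Delta(R_M)$. What the citation buys is brevity.

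There is one bookkeeping slip in Step~3. The count ``at most $b+2D-2$'' fails when $D=0$: then $m$, at position $b$, is not among positions $1,\ldots,b-1$, so there is nothing to exclude. Your intermediate claim $a\le b+2D-1$ is then false. With unanimous rankings on both sides, the unique stable matching pairs equal ranks, so $a=b$. The fix is simply not to exclude $m$. The injection into $w$'s top $b+2D-1$ positions gives $a-1\le b+2D-1$, which is exactly the lemma; the extra $-1$ is valid only when $D\ge 1$.
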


We embed this deterministic result into our probabilistic framework to show that:
\begin{proposition}\label{prop}
    Given $\phi_{m},\phi_{w}<1$, let the men's preferences over women and the women's preferences over men be distributed independently as $F_{\phi_{m},n+k}$ and $F_{\phi_{w},n}$ respectively, where $k$ is at most polynomial in $n$. On a sequence of events $\{E_{n}\}_{n\in\mathbb{N}}$, with $\mathbb{P}(E_{n})\rightarrow 1$, we have $|r_{w}(m)-r_{m}(w)|\leq O(\log(n+k))$ and $|r(m)-r(w)|\leq O(\log(n+k))$ for every stable matching $\mu$ and $(m,w)\in\mu$.
\end{proposition}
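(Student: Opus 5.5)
The plan is to define $E_n$ as the high-probability event supplied by Lemma \ref{lem1}, applied separately to each side of the market, and then to argue deterministically on $E_n$ using Lemma \ref{lem2} together with the triangle inequality.

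\textbf{Step 1: constructing $E_n$.} For the men's side, apply Lemma \ref{lem1} with $t=n+k$ and $s=n$ draws; note $n\le t$, so $s\le O(t^{1})$. Choose $d_m=c_m\log(n+k)$ with $c_m>-2/\log\phi_m$. This gives an event $A_n$ with $\mathbb{P}(A_n)\to1$ on which every man's ranking of every woman $w$ lies within $d_m$ of her central rank $r(w)$.

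For the women's side, apply Lemma \ref{lem1} with $t=n$ and $s=n+k$ draws. Since $k$ is at most polynomial in $n$, we have $s\le O(n^{z})$ for some $z$. Choose $d_w=c_w\log n$ with $c_w>-(z+1)/\log\phi_w$. This gives an event $B_n$ with $\mathbb{P}(B_n)\to1$ on which every woman's ranking of every man $m$ lies within $d_w$ of $r(m)$.

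Set $E_n=A_n\cap B_n$. Then $\mathbb{P}(E_n)\ge 1-\mathbb{P}(A_n^c)-\mathbb{P}(B_n^c)\to1$. The degenerate case $\phi=0$ is trivial, since displacements are then zero.

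\textbf{Step 2: bounding the displacements.} On $E_n$, every $x$ satisfies $\delta(x)\le 2\max\{d_m,d_w\}$, because all rankings of $x$ lie in an interval of radius $d$ around $r(x)$. Hence both $\Delta(R_M)$ and $\Delta(R_W)$ are $O(\log(n+k))$. Lemma \ref{lem2} then gives, for every stable $\mu$ and every pair $(m,w)\in\mu$,
\[
|r_m(w)-r_w(m)|\le 4\max\{d_m,d_w\}=O(\log(n+k)).
\]

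\textbf{Step 3: comparing central ranks.} By the triangle inequality,
\[
|r(m)-r(w)|\le |r(m)-r_w(m)|+|r_w(m)-r_m(w)|+|r_m(w)-r(w)|.
\]
On $E_n$ the first and third terms are at most $d_w$ and $d_m$, and the middle term is bounded by Step 2. All three terms are therefore $O(\log(n+k))$.

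\textbf{Main obstacle.} The main obstacle is Step 1. We must check that the polynomial-size condition of Lemma \ref{lem1} holds on both sides; the women's side, with $n+k$ draws over only $n$ items, is the one that uses the assumption on $k$. We must also check that the bounds, stated in terms of $\log n$ and $\log(n+k)$, combine uniformly, which is fine since $\log n\le\log(n+k)$.

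A subtlety is how Lemma \ref{lem1} is applied to a preference draw. It bounds $|\pi(i)-i|$, where $\pi$ indexes positions. One must confirm that this controls the rank each agent assigns to a given member of the opposite side, i.e. a bound on $|\pi^{-1}(i)-i|$. This holds because $\pi^{-1}$ has the same Mallows law as $\pi$ (inversions are preserved under inversion), and in any case $\max_i|\pi(i)-i|=\max_i|\pi^{-1}(i)-i|$.
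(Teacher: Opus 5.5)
Your proof is correct and follows the paper's argument: you build $E_n=A_n\cap B_n$ from Lemma \ref{lem1} applied to each side, bound $\Delta(R_M)$ and $\Delta(R_W)$ by $O(\log(n+k))$, apply Lemma \ref{lem2}, and finish with the same triangle inequality. Your bookkeeping is in fact better than the paper's: you correctly attach the polynomial exponent to the women's side ($t=n$ items, $s=n+k$ draws, hence $c_w>-(z+1)/\log\phi_w$), whereas the paper's proof uses $c_w>-2/\log\phi_w$, which only suffices when $k=O(n)$, and your observation that $\max_i|\pi(i)-i|=\max_i|\pi^{-1}(i)-i|$ settles a convention issue the paper leaves implicit.
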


The idea is that no matter how weak the correlation in preferences is, disparities in popularity add up over long distances, so as the market size tends to infinity, the probability that anyone is displaced very far from the Mallows central order is very small. Thus, markets effectively become localized, and the gap in mutual rankings does not grow too fast. The key technical contribution is in showing that this rate is bounded by $O(\log(n+k))$ with high probability.

\begin{proof}
Let $A_{n}$ denote the event that no man shifts any woman by $c_{m}\log(n+k)$ or more ranks away from their position in the Mallows central order, where $c_{m}>-(z+1)/\log{\phi_{m}}$ and $B_{n}$ denote the event that no woman shifts any man by $c_{w}\log{n}$ or more ranks away from their position in the Mallows central order, where $c_{w}>-2/\log{\phi_{w}}$. It follows that the maximal displacement of women by men is $\Delta(R_{M})=2c_{m}\log(n+k)$ on $A_{n}$ and the maximal displacement of men by women is $\Delta(R_{W})=2c_{w}\log{n}$ on $B_{n}$.  By Lemma \ref{lem1}, we know that $\mathbb{P}(A_{n}),\mathbb{P}(B_{n})\rightarrow 1$ as $n\rightarrow\infty$. Thus, on the event $E_{n}=A_{n}\cap B_{n}$, where $\mathbb{P}(E_{n})=\mathbb{P}(A_{n})\mathbb{P}(B_{n})\rightarrow 1$, we have
\begin{equation*}
    |r_{w}(m)-r_{m}(w)|\leq 2\max\{2c_{m}\log(n+k),2c_{w}\log{n}\}=O(\log(n+k))
\end{equation*}
for every stable matching $\mu$ and $(m,w)\in\mu$ by Lemma \ref{lem2}.
\medskip
\par
Furthermore, note that $|r(w)-r_{m}(w)|$ and $|r_{w}(m)-r(m)|$ are by definition both bounded by $O(\log(n+k))$ for all $m\in M$ and $w\in W$ on $E_{n}$. Hence, by the triangle inequality, we have
\begin{equation}\label{eq:1}
    |r(w)-r(m)|\leq|r(w)-r_{m}(w)|+|r_{m}(w)-r_{w}(m)|+|r_{w}(m)-r(m)|=O(\log(n+k))
\end{equation}
for every stable matching $\mu$ and $(m,w)\in\mu$.
\end{proof}

\subsection{Assortative Matching}

We are now in a position to prove the main result of the paper, which says that in a balanced market, each person is matched to someone of similar rank in every stable matching with high probability as the market size becomes large. Define person $i$'s quantile rank in the population as $q_{i}=r(i)/n$. This is their relative ranking against members of the same sex according to the Mallows central order.

\begin{theorem}
    Given $\phi_{m},\phi_{w}<1$, let the men's preferences over women and the women's preferences over men be distributed independently as $F_{\phi_{m},n}$ and $F_{\phi_{w},n}$ respectively. On a sequence of events $\{E_{n}\}_{n\in\mathbb{N}}$, with $\mathbb{P}(E_{n})\rightarrow 1$, the quantile gap $|q_{w}-q_{m}|$ of every stably matched pair $(m,w)$ converges to zero as the market size $n$ tends to infinity.
\end{theorem}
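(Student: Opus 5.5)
The theorem is essentially the balanced special case $k=0$ of Proposition \ref{prop}, followed by division by $n$. I will take the events $E_{n}$ to be exactly those constructed in the proof of Proposition \ref{prop} with $k=0$. So $E_{n}=A_{n}\cap B_{n}$, where $A_{n}$ is the event that no man displaces any woman by $c_{m}\log n$ or more positions from the Mallows central order, and $B_{n}$ is the same event for women's rankings of men with constant $c_{w}$.

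To make $k=0$ fit the hypotheses, I take $s=n$ independent draws on $t=n$ elements in Lemma \ref{lem1}, so $z=1$. This requires $c_{m}>-2/\log\phi_{m}$ and $c_{w}>-2/\log\phi_{w}$. Both are finite positive constants because $\phi_{m},\phi_{w}<1$. The degenerate case $\phi=0$ needs a separate remark: all preferences then equal the central order, both displacements are zero, and the events hold trivially.

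Lemma \ref{lem1} then gives $\mathbb{P}(A_{n})\to1$ and $\mathbb{P}(B_{n})\to1$. Since the men's and women's preferences are drawn independently, $\mathbb{P}(E_{n})=\mathbb{P}(A_{n})\mathbb{P}(B_{n})\to1$.

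On $E_{n}$, inequality \eqref{eq:1} from the proof of Proposition \ref{prop} states that every stable matching $\mu$ and every pair $(m,w)\in\mu$ satisfy $|r(w)-r(m)|\le C\log n$. The key observation is that $C$ depends only on $c_{m}$ and $c_{w}$, and not on $\mu$, on the particular pair, or on $n$. Concretely, the three terms of the triangle inequality contribute at most $c_{m}\log n$, then $2\max\{2c_{m},2c_{w}\}\log n$ by Lemma \ref{lem2}, then $c_{w}\log n$. Because the market is balanced, both quantiles carry the same denominator $n$, so on $E_{n}$
\begin{equation*}
\sup_{\mu\ \text{stable}}\ \max_{(m,w)\in\mu}|q_{w}-q_{m}|=\frac{1}{n}\sup_{\mu}\max_{(m,w)\in\mu}|r(w)-r(m)|\le \frac{C\log n}{n}\longrightarrow 0 .
\end{equation*}
This is the claimed convergence, and it holds uniformly over all stable matchings and all matched pairs.

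The only real obstacle is making sure the $O(\log n)$ bound is uniform, i.e., that the constant is deterministic and not tied to a particular realization or matching. Writing the explicit constant $C$ above settles this. A smaller check is that in the balanced case every stable matching is perfect, because all agents prefer any partner to remaining single. Hence every agent belongs to some pair $(m,w)$ and every agent's quantile gap is controlled.
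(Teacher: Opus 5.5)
Your proposal is correct and follows the paper's own proof. Like the paper, you take $E_n=A_n\cap B_n$ from Proposition~\ref{prop} with $k=0$, invoke the bound \eqref{eq:1}, and divide by $n$; your explicit deterministic constant $C$, the bookkeeping $s=t=n$, $z=1$ in Lemma~\ref{lem1}, and the separate remark on $\phi=0$ spell out what the paper leaves inside its $O(\log n)$ notation.
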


\begin{proof}
The sequence of events $\{E_{n}\}_{n\in\mathbb{N}}$ is defined as in Proposition \ref{prop}. Dividing (\ref{eq:1}) by $n$ and taking limits:
\begin{equation*}
    \lim_{n\rightarrow\infty}\bigg|\frac{r(w)}{n}-\frac{r(m)}{n}\bigg|\leq\lim_{n\rightarrow\infty}\frac{O(\log{n})}{n}=0.
\end{equation*}

This shows that the gap in rank quantiles vanishes in the limit.
\end{proof}

\subsection{Welfare Equivalence}

Next, we turn to welfare implications of stable matchings with correlated preferences, which hold even under market imbalance. For a given matching $\mu$, we define the men's average ranking of women as
\begin{equation*}
    A_{M}(\mu)=\frac{1}{|M\setminus\overline{M}|}\sum_{m\in M\setminus\overline{M}}r_{m}(\mu(m)),
\end{equation*}
where $\overline{M}$ is the set of men who are unmatched under $\mu$, and the women's average rank of men as
\begin{equation*}
    A_{W}(\mu)=\frac{1}{|W\setminus\overline{W}|}\sum_{w\in W\setminus\overline{W}}r_{w}(\mu(w)),
\end{equation*}
where $\overline{W}$ is the set of women who are unmatched under $\mu$. Note that under our assumptions, the men are always matched, whereas exactly $k$ women remain unmatched. The average ranking is taken as a measure of the welfare of each side.
\medskip
\par
Our first corollary shows the equivalence in welfare between the optimal vs. pessimal matching for each side as the market size $n$ tends to infinity. Denote by $\mu_{M}$ and $\mu_{W}$ the male and female optimal matching respectively.

\begin{corollary}\label{cor}
    Given $\phi_{m},\phi_{w}<1$, if the men's preferences over the women and the women's preferences over the men are distributed independently as $F_{\phi_{m},n+k}$ and $F_{\phi_{w},n}$ respectively, where $k$ is at most polynomial in $n$, then on a sequence of events $\{E_{n}\}_{n\in\mathbb{N}}$, with $\mathbb{P}(E_{n})\rightarrow 1$, we have
    \begin{equation*}
        \lim_{n\rightarrow\infty}\frac{A_{M}(\mu_{W})}{A_{M}(\mu_{M})}=\lim_{n\rightarrow\infty}\frac{A_{W}(\mu_{M})}{A_{W}(\mu_{W})}=1.
    \end{equation*}
\end{corollary}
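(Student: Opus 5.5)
On the event $E_{n}$ of Proposition \ref{prop}, I will show that \emph{every} stable matching has both averages equal to $(n+1)/2$ up to an additive $O(\log(n+k))$ error. The ratios then tend to $1$ because $\log(n+k)/n\rightarrow 0$ when $k$ is polynomial in $n$. In particular I will not use any lattice structure or specific properties of $\mu_{M}$ and $\mu_{W}$ beyond their stability.

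\textbf{Step 1: fix uniform constants.} Take $E_{n}=A_{n}\cap B_{n}$ exactly as in the proof of Proposition \ref{prop}, with the constants $c_{m},c_{w}$ fixed. On $E_{n}$ the following hold deterministically for all $m$, all $w$, and every stable $\mu$ with $(m,w)\in\mu$:
\begin{equation*}
|r_{m}(w)-r(w)|< c_{m}\log(n+k),\qquad |r_{w}(m)-r(m)|< c_{w}\log n,\qquad |r(w)-r(m)|\leq C\log(n+k),
\end{equation*}
where $C$ depends only on $c_{m}$ and $c_{w}$. It matters that these bounds are uniform over pairs and over stable matchings, not merely asymptotic statements about each pair separately.

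\textbf{Step 2: compute $A_{M}(\mu)$.} Since $l=n+k\geq n$ and everyone prefers being matched to staying single, every stable $\mu$ matches all $n$ men. Hence
\begin{equation*}
A_{M}(\mu)=\frac{1}{n}\sum_{m\in M}r_{m}(\mu(m))=\frac{1}{n}\sum_{m\in M}r(m)+\frac{1}{n}\sum_{m\in M}\big(r_{m}(\mu(m))-r(\mu(m))\big)+\frac{1}{n}\sum_{m\in M}\big(r(\mu(m))-r(m)\big).
\end{equation*}
The first term is exactly $(n+1)/2$, because $r$ restricted to $M$ is a bijection onto $\{1,\ldots,n\}$. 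The other two terms are each bounded in absolute value by $O(\log(n+k))$ by Step 1. This avoids any need to know which $k$ women are left unmatched: we pass from women's central ranks back to the men's central ranks, whose sum is known exactly.

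\textbf{Step 3: compute $A_{W}(\mu)$.} The matched women are in bijection with $M$ via $\mu$. So
\begin{equation*}
A_{W}(\mu)=\frac{1}{n}\sum_{m\in M}r_{\mu(m)}(m)=\frac{n+1}{2}+O(\log n).
\end{equation*}

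\textbf{Step 4: take ratios.} Apply Steps 2 and 3 to $\mu_{M}$ and $\mu_{W}$. Each ratio has the form
\begin{equation*}
\frac{(n+1)/2+O(\log(n+k))}{(n+1)/2+O(\log(n+k))}.
\end{equation*}
The denominator is eventually bounded below by $n/4$. Since $k\leq O(n^{z})$, we have $\log(n+k)=O(\log n)=o(n)$, so both ratios converge to $1$ on $E_{n}$, and $\mathbb{P}(E_{n})\rightarrow 1$ by Proposition \ref{prop}.

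\textbf{Main obstacle.} The delicate point is the first term in Step 2. The men rank women on a scale of length $n+k$, which could be much larger than $n$. Naively averaging $r(\mu(m))$ over an unknown set of matched women would not give a clean value. The fix is the triangle-inequality bound $|r(\mu(m))-r(m)|\leq C\log(n+k)$ from Proposition \ref{prop}. It pins the matched women's central ranks to within $O(\log(n+k))$ of $\{1,\ldots,n\}$, so no woman of central rank much beyond $n$ is matched on $E_{n}$.
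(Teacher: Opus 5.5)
Your proof is correct and follows the paper's own argument. On $E_{n}$, the triangle inequality with the Proposition's bounds pins every stable matching's average rank to $(n+1)/2\pm O(\log(n+k))$ via the exact sum $\sum_{m\in M}r(m)=n(n+1)/2$, and the ratios then tend to $1$. The differences are cosmetic: you route $r_{m}(\mu(m))-r(m)$ through $r(\mu(m))$ rather than $r_{\mu(m)}(m)$, you get the lower bound on the ratio from the two-sided estimate instead of $A_{M}(\mu_{W})\geq A_{M}(\mu_{M})$, and your Step 3 spells out the women's case (summing over men) that the paper leaves as ``the same reasoning.''
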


\begin{proof}
The sequence of events $\{E_{n}\}_{n\in\mathbb{N}}$ is defined as in Proposition \ref{prop}, from which we have that
\begin{equation*}
    |r_{m}(w)-r(m)|\leq |r_{m}(w)-r_{w}(m)|+|r_{w}(m)-r(m)|=O(\log(n+k)),
\end{equation*}
for any stable matching $\mu$ and $(m,w)\in\mu$. Thus, $r_{m}(w)=r(m)\pm O(\log(n+k))$, so that
\begin{equation*}
    \frac{\sum_{m\in M}[r(m)-O(\log(n+k))]}{n}\leq A_{M}(\mu)\leq\frac{\sum_{m\in M}[r(m)+O(\log(n+k))]}{n}
\end{equation*}
on $E_{n}$. Observing that $\sum_{m\in M}r(m)=1+\ldots+n=n(n+1)/2$, we have
\begin{equation*}
    \lim_{n\rightarrow\infty}\frac{A_{M}(\mu_{W})}{A_{M}(\mu_{M})}\leq\lim_{n\rightarrow\infty}\frac{(n+1)/2+O(\log(n+k))}{(n+1)/2-O(\log(n+k))}=1.
\end{equation*}

Since $A_{M}(\mu_{W})\geq A_{M}(\mu_{M})$, it follows that $\lim_{n\rightarrow\infty}A_{M}(\mu_{W})/A_{M}(\mu_{M})=1$. The same reasoning applies to the women's average ranking.
\end{proof}

This means that which side does the proposing makes little difference to welfare when preferences are correlated. This is in sharp contrast to the uniform random preference case (in balanced markets), where by \cite{pittel1989average},
\begin{equation*}
    \frac{A_{M}(\mu_{W})}{A_{M}(\mu_{M})}=\frac{A_{W}(\mu_{M})}{A_{W}(\mu_{W})}\sim\frac{n}{(\log{n})^{2}}\rightarrow\infty
\end{equation*}
with high probability.
\medskip
\par
Our second corollary shows the equivalence of welfare for men and women in the presence of market imbalance:

\begin{corollary}
    Given $\phi_{m},\phi_{w}<1$, if the men's preferences over the women and the women's preferences over the men are distributed independently as $F_{\phi_{m},n+k}$ and $F_{\phi_{w},n}$ respectively, where $k$ is at most polynomial in $n$, then on a sequence of events $\{E_{n}\}_{n\in\mathbb{N}}$, with $\mathbb{P}(E_{n})\rightarrow 1$, we have
    \begin{equation*}
        \lim_{n\rightarrow\infty}\frac{A_{W}(\mu)}{A_{M}(\mu)}=1.
    \end{equation*}
    in every stable matching $\mu$.
\end{corollary}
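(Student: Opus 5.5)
The plan is to reuse the events $\{E_{n}\}_{n\in\mathbb{N}}$ from Proposition \ref{prop} and show that, on $E_{n}$, both averages are sandwiched around the same quantity up to an additive $O(\log(n+k))$ error. That quantity is the average central rank of the matched men, $(n+1)/2$. Throughout I fix a stable matching $\mu$. Since every individual prefers any partner to remaining single and $l=n+k\geq n$, every man is matched. Hence $|M\setminus\overline{M}|=|W\setminus\overline{W}|=n$, and the matched women are exactly $\mu(M)$.

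\textbf{Step 1: the men's average.} For $m\in M$, Proposition \ref{prop} together with the triangle inequality (exactly as in the proof of Corollary \ref{cor}) gives
\begin{equation*}
|r_{m}(\mu(m))-r(m)|\leq|r_{m}(\mu(m))-r_{\mu(m)}(m)|+|r_{\mu(m)}(m)-r(m)|=O(\log(n+k))
\end{equation*}
on $E_{n}$. The constants here are uniform over all $m$ and all stable $\mu$, since they come from $c_{m}$ and $c_{w}$. Averaging over the $n$ men yields $A_{M}(\mu)=(n+1)/2\pm O(\log(n+k))$.

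\textbf{Step 2: the women's average.} Reindex the sum defining $A_{W}(\mu)$ by the men, writing $w=\mu(m)$. On $E_{n}$ the displacement of men by women is at most $2c_{w}\log n$, so $|r_{\mu(m)}(m)-r(m)|\leq O(\log n)$ for every $m$. Averaging gives
\begin{equation*}
A_{W}(\mu)=\frac{1}{n}\sum_{m\in M}r_{\mu(m)}(m)=\frac{n+1}{2}\pm O(\log n).
\end{equation*}
This step does not even need the cross bound of Lemma \ref{lem2}, because it compares women's rankings of men directly with the men's central order.

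\textbf{Step 3: the ratio.} On $E_{n}$ we now have
\begin{equation*}
\frac{A_{W}(\mu)}{A_{M}(\mu)}\in\left[\frac{(n+1)/2-O(\log(n+k))}{(n+1)/2+O(\log(n+k))},\ \frac{(n+1)/2+O(\log(n+k))}{(n+1)/2-O(\log(n+k))}\right].
\end{equation*}
Both endpoints tend to $1$ provided $\log(n+k)=o(n)$. This holds because $k$ is at most polynomial in $n$, so $\log(n+k)=O(\log n)$.

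\textbf{Main obstacle.} The delicate point is that the $O(\cdot)$ bounds must be uniform over all stable matchings and all pairs, so that the limit statement holds for every $\mu$ simultaneously on $E_{n}$. I would handle this by making the constants explicit from Proposition \ref{prop}: the bound is $2\max\{2c_{m}\log(n+k),2c_{w}\log n\}$ plus the displacement terms, and none of these depend on $\mu$. A second subtlety is that the women's central ranks $r(w)$ range over $\{1,\ldots,n+k\}$, so their average is not $(n+1)/2$. That is why Step 2 anchors the women's average to the men's central ranks through $\mu$, rather than to $r(w)$.
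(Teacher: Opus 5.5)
Your argument is correct and is essentially the paper's: both work on the events $E_n$ of Proposition~\ref{prop}, obtain additive $O(\log(n+k))$ errors uniform in $\mu$, and compare them with the order-$n$ quantity $(n+1)/2=\frac{1}{n}\sum_{m\in M}r(m)$; the paper bounds $A_W(\mu)\le A_M(\mu)+O(\log(n+k))$ via Lemma~\ref{lem2}, divides by the lower bound on $A_M(\mu)$ from Corollary~\ref{cor}, and gets the reverse inequality ``by symmetry,'' whereas you anchor each average to $(n+1)/2$ separately, which makes the reverse direction explicit in the unbalanced market. One wording fix concerns Step 2: the bound $|r_{\mu(m)}(m)-r(m)|<c_w\log n$ comes directly from the definition of $B_n$, not from $\Delta(R_W)\le 2c_w\log n$, since the latter only controls disagreement among the women and not their distance from the central order.
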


\begin{proof}
    The sequence of events $\{E_{n}\}_{n\in\mathbb{N}}$ is defined as in Proposition \ref{prop}, from which we have
    \begin{equation*}
        A_{W}(\mu)=\frac{\sum_{m\in M}r_{\mu(m)}(m)}{n}\leq\frac{\sum_{m\in M}r_{m}(\mu(m))+O(\log(n+k))}{n}=A_{M}(\mu)+O(\log(n+k)),
    \end{equation*}
    for every stable matching $\mu$ on $E_{n}$.
    \medskip
    \par
    Now from the inequality $A_{M}(\mu)\geq (n+1)/2-O(\log(n+k))$ derived in Corollary \ref{cor}, it follows that
    \begin{align*}
        \frac{A_{W}(\mu)}{A_{M}(\mu)}&\leq\frac{A_{M}(\mu)+O(\log(n+k))}{A_{M}(\mu)}\\
        &\leq 1+\frac{O(\log(n+k))}{(n+1)/2-O(\log(n+k))}.
    \end{align*}
    
    Thus, $\lim_{n\rightarrow\infty}A_{W}(\mu)/A_{M}(\mu)\leq 1$, and by symmetry, $\lim_{n\rightarrow\infty}A_{M}(\mu)/A_{W}(\mu)\leq 1$. The result follows.
\end{proof}

This means that there is no advantage to the short side and no disadvantage to the long side when preferences are correlated. This is in sharp contrast to the uniform preference case (in unbalanced markets), where by \cite{ashlagi2017unbalanced},
\begin{equation*}
    \lim_{n\rightarrow\infty}\frac{A_{W}(\mu)}{A_{M}(\mu)}=\infty
\end{equation*}
with high probability in every stable matching $\mu$.

\newpage
\bibliographystyle{apalike}
\bibliography{reference}

\end{document}